
\documentclass[12pt]{article}
 \textwidth 16.5cm
\textheight 22.5cm
 \oddsidemargin 0pt
\topmargin -0.5cm

\usepackage{amsmath,bbm,amssymb}
\usepackage{amsthm,amscd}
\usepackage{graphicx, graphics}
\usepackage[labelsep=period]{caption}
\usepackage{comment}

\newtheorem{proposition}{Proposition}

\usepackage{amsmath,bbm,amssymb}
\usepackage{amsthm,amscd}
\usepackage{graphicx, graphics}
\usepackage{multirow}

\newcommand{\bm}[1]{\mbox{\boldmath$ #1 $\unboldmath}}

\newcounter{example}
\newenvironment{example}[1][]{\refstepcounter{example}\par\medskip\noindent%
   \textbf{Example~\theexample. #1}\rmfamily}{\medskip}

\def\OA{{\mbox{\small OA}}}
\def\NOLH{{\mbox{\small NOLHD}}}
\newcommand{\bX}{\textbf{X}}
\newcommand{\bx}{\textbf{x}}
\newcommand{\bt}{\textbf{t}}
\newcommand{\bD}{\textbf{D}}
\newcommand{\bA}{\textbf{A}}
\newcommand{\bB}{\textbf{B}}

\newcommand{\bC}{\textbf{C}}
\newcommand{\bL}{\textbf{L}}
\newcommand{\bM}{\textbf{M}}
\newcommand{\bV}{\textbf{V}}
\newcommand{\bI}{\textbf{I}}

\newcommand{\bZ}{\textbf{Z}}
\newcommand{\brho}{\mbox{\boldmath${\rho}$} }

\begin{document}
\baselineskip=22pt
\vskip 20pt

\begin{center}
{\Large \bf Efficient Experimental Design for Regularized Linear Models}
\\

\vskip 15pt
\begin{minipage}[c]{8cm}
\centering C. Devon Lin \\
Department of Mathematics and Statistics\\
Queen's University, Canada
\end{minipage}
\hspace{0.5cm}
\begin{minipage}[c]{6cm}
\centering Peter Chien\\
Department of Statistics\\
University of Wisconsin-Madison
\end{minipage}
\vskip 15pt
\begin{minipage}[c]{6cm}
\centering Xinwei Deng\\
Department of Statistics\\
Virginia Tech
\end{minipage}
\vskip 20pt
\end{center}

\begin{abstract}

Regularized linear models, such as Lasso, have attracted great attention in statistical learning and data science.
However, there is sporadic work on constructing efficient data collection for regularized linear models.
In this work,
we propose an experimental design approach, using nearly orthogonal Latin hypercube designs, to enhance the variable selection accuracy of the regularized linear models.
Systematic methods for constructing such designs are presented.
The effectiveness of the proposed method is illustrated with several examples.
\end{abstract}

{\bf Keywords:} Design of experiments; Latin hypercube design; Nearly orthogonal design; Regularization, Variable selection.

\section{Introduction}\label{sec: intro}

In statistical learning and data sciences,
regularized linear models have attracted great attention across multiple disciplines (Fan, Li, and Li, 2005; Hesterberg et al., 2008; Huang, Breheny, and Ma, 2012; Heinze, Wallisch, and Dunkler, 2018).
Among various regularized linear models, the Lasso is one of the most well-known techniques on the $L_{1}$ regularization to achieve accurate prediction with variable selection (Tibshirani, 1996).
Statistical properties and various extensions of this method have been actively studied in recent years (Tibshirani, 2011; Zhao and Yu 2006; Zhao et al., 2019; Zou and Hastie, 2015; Zou, 2016).
However, there is sporadic work on constructing efficient data collection for regularized linear models.
In this article, we study the data collection for the regularized linear model from an experimental design perspective.

First, we give a brief description of the Lasso procedure. Consider a linear model
\begin{align}\label{eq: regression}
y = \bx^{T} \bm \beta + \epsilon,
\end{align}
where $\bx  = (x_{1}, \ldots, x_{p})^{T}$ is the vector of $p$ continuous predictor variables,
$y$ is the response value, $\bm \beta = (\beta_{1}, \ldots, \beta_{p})^{T}$ are the vector of
regression parameters, and the error term  $\epsilon$ is normally distributed with mean zero and variance $\sigma^{2}$.
Throughout, assume data are centered so that the model in \eqref{eq: regression}
has no intercept.
Suppose this model has a \textit{sparse} structure
for which only $p_{0}$ predictor variables are \textit{active} with non-zero regression coefficients,
where $p_{0} < p$.  Let $\mathcal{A}( \bm \beta ) = \{ j:  \beta_{j} \ne 0, j = 1, \ldots, p \}$ be the set of the indices of the active variables.
Then the cardinality of the set
$\mathcal{A}( \bm \beta )$
is $p_{0}$.

For a given $n \times p$ regression matrix
$\bX = (\bm x_{1}, \ldots, \bm x_{n})^{T}$,
and a given response vector $\bm y = (y_{1}, \ldots, y_{n}) ^{T}$,
the Lasso solution is
\begin{align}\label{eq: Lasso}
\hat{\bm \beta} = \mbox{arg}\min_{\bm \beta} [(\bm y - \bX \bm \beta)^{T}(\bm y - \bX \bm \beta) + \lambda \|\bm \beta\|_{l_{1}}],
\end{align}
where $\| \bm \beta\|_{l_1} = \sum_{i = 1}^{p} |\beta_{i}|$ and $\lambda$ is a tuning parameter.
Because the $l_1$ norm $\| \cdot \|_{l_{1}}$ is singular at the origin,
a desirable property of the Lasso is that some
coefficients of $\hat{\bm \beta}$ are exactly zero.
Then $\mathcal{A}(\bm \beta)$ can be estimated by
$\mathcal{A}(\hat{\bm \beta}) = \{j: \hat{\beta}_{j} \ne 0, j = 1, \ldots, p \}$. The number of false selections of the Lasso is
\begin{align}\label{eq: false_selection}
\gamma  = \# \{ j: j \in  \mathcal{A}(\hat{\bm \beta}) \mbox{ but } j \notin \mathcal{A}( \bm \beta ) \}
+ \# \{ j: j \notin  \mathcal{A}(\hat{\bm \beta}) \mbox{ but } j \in \mathcal{A}( \bm \beta ) \},
\end{align}
where $\#$ denotes the set cardinality, the first term counts the number of false positives and the second term counts the number of false negatives.

The scope of this work is in developing experimental design techniques to construct the regression matrix $\mathbf{X}$ in \eqref{eq: Lasso} in order to minimize the value of $\gamma$ in (\ref{eq: false_selection}), the number of false selection.
Based on the probability properties of regularized linear models, it often requires large randomness in the regression matrix (Jung et al. 2019).
Thus, a straightforward way is to take $\bX$ to be an independently and identically distributed (i.i.d.) sample.
However, from an experimental design perspective, the points of an i.i.d.~sample is not well stratified in the design space
(Box, Hunter, and Hunter, 2005; Wu and Hamada, 2009). To improve upon this scheme, we propose to take $\bX$ to be a nearly orthogonal Latin hypercube design (NOLHD)
that is a Latin hypercube design with nearly orthogonal columns.
A Latin hypercube design is a space-filling design that
achieves maximum uniformity when the points are projected onto
any one dimension  (McKay, Beckman, and Conover, 1979). An NOLHD simultaneously possesses two desirable properties: low-dimensional stratification
and nearly orthogonality. Owen (1992)
stated the advantage of using Latin hypercube designs for
fitting additive models. It was discussed in
Section 3 of Owen (1992) that the least-squares estimates of
the regression coefficients of an additive regression with a Latin hypercube design can
have significant smaller variability than their
counterparts under an i.i.d. sample.
Since the model in \eqref{eq: regression}
is additive, $\hat{\bm \beta}$ in \eqref{eq: Lasso} associated with a Latin hypercube design
is expected to be superior to that with an i.i.d. sample.
Both random Latin hypercube designs and NOLHDs are popular in computer experiments (Lin and Tang,  2015).
It is advantageous to
use NOLHDs instead of
random Latin hypercube designs for the Lasso problem because the former have guaranteed small columnwise correlations.
When the regression matrix $\bX$ is taken to be an NOLHD,
its small columnwise correlations allow the active variables less correlated with the inactive variables,
thus improving the selection accuracy of the Lasso.

There is some consistency between
the concept of NOLHDs and the sparsity concept in variable selection.
The sparsity assumption we have made earlier
for the model in (\ref{eq: regression}) states that
only $p_0$ variables in the model are active and does not specify
which $p_0$ variables are active. If the regression matrix $\bX$ for this model
is an  NOLHD of $n$ runs for $p$ input variables,
when the points of this design are projected onto any $p_0$ dimensions, the resulting design
still retains the NOLHD structure for the $p_0$ factors.
Note that an NOLHD has more than two levels
and spreads the points evenly in the design space, not restricted to the boundaries only.
Since the number of false selections $\gamma$ in \eqref{eq: false_selection} has a nonlinear relation with the regression matrix $\bX$,
the use of an NOLHD for the Lasso problem is more appropriate than a two-level design to exploit this complicated relation between $\gamma$ and $\bX$.

The remainder of the article is organized as follows.
Section \ref{sec: OLHD-construction} introduces
a new criterion to measure NOLHDs, and presents two systematic methods for constructing
such designs for the Lasso problem. Section \ref{sec: Simulation} provides numerical examples to bear out the effectiveness of the proposed method.
 The numerical examples in Section
\ref{sec: Simulation} clearly indicate the superiority of NOLHDs
over two-level designs for the Lasso problem.
We provide a brief discussion in Section~\ref{sec: Discussion}.

\section{Methodology} \label{sec: OLHD-construction}

In this section we discuss the construction  of NOLHDs and how to use them for the Lasso problem.
We prefer NOLHDs
over random Latin hypercube designs because the latter are not guaranteed to have small columnwise correlations.
An an illustration, let $n = 64$ and $p=192$
and compute $\gamma$ in \eqref{eq: false_selection} for two different choices of $\bX$ in \eqref{eq: Lasso}.
Assume the model in \eqref{eq: regression} has $\sigma = 8$ and $\bm \beta = (0.05,0.2, \ldots, 3.0, 0 \ldots, 0)^{T}$,
where only the first $20$ coefficients are nonzeros, and the predictor variables take values on the hypercube $[-(64-1)/2, (64-1)/2]^{p}$.
The first method takes the design matrix $\bX$ to be a random Latin hypercube design constructed by \eqref{eq: lh}.
\begin{figure}[h]
\centering \resizebox{300pt}{250pt}
{\includegraphics{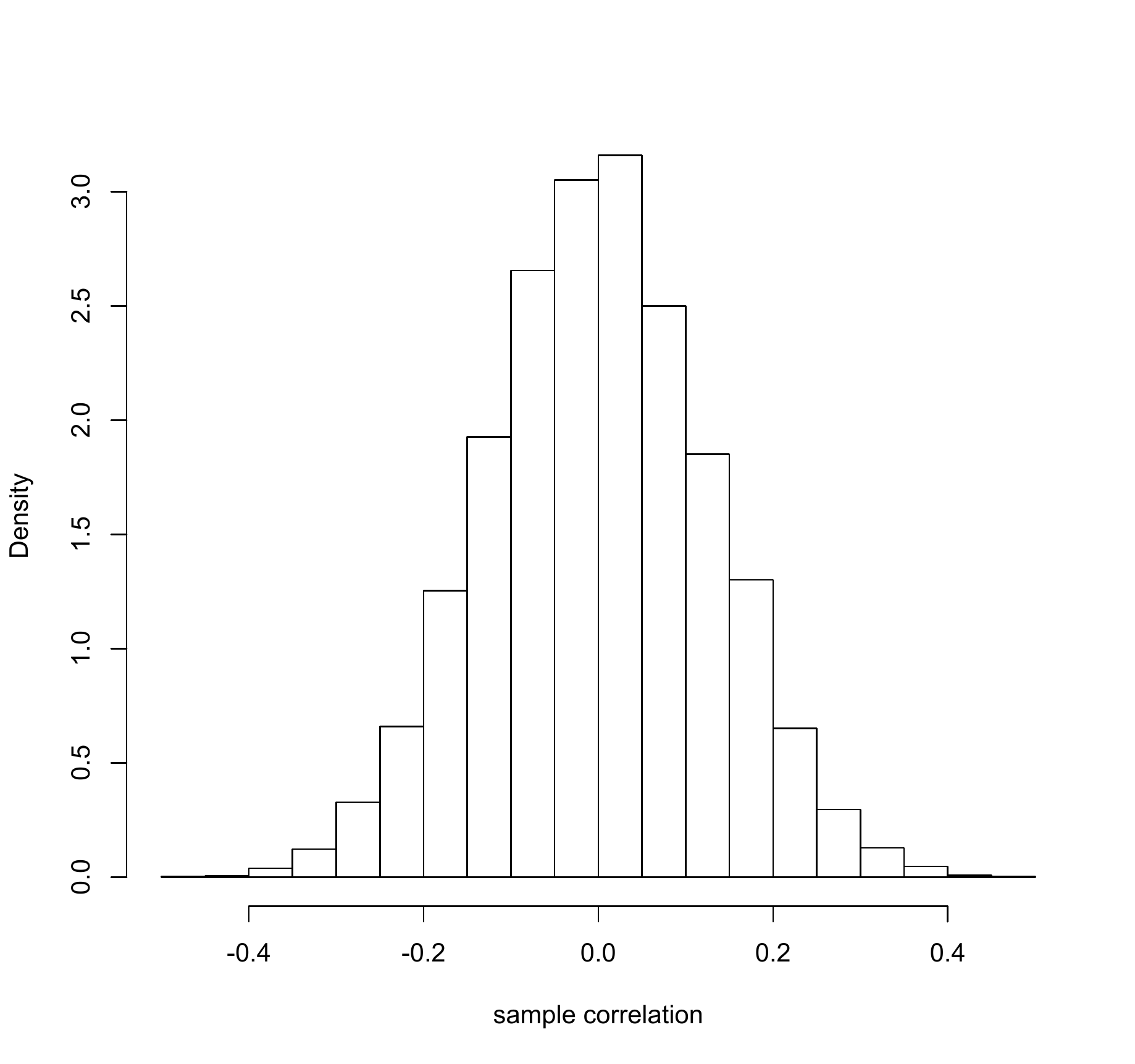}} \caption{Histogram of the sample correlations of a $64 \times 192$ random Latin hypercube design.}\label{fig: CorrPlot-lhd-hist}
\end{figure}
Fig.~\ref{fig: CorrPlot-lhd-hist} depicts the histogram of the columnwise sample correlations of one random Latin hypercube design,
where $21\%$ of the columnwise correlations
of the matrix are larger than $0.1$ in absolute values.
This method gives $\gamma = 36$. The second method takes the design matrix to be a $64 \times 192$ NOLHD from Section \ref{sec: construction-oa}, where 
the columnwise correlations of the matrix are very small. The second method gives $\gamma = 20$. The difference of $\gamma$ values of the two methods indicates that
the Lasso solution with an NOLHD can be far more
superior.

Here are some useful notation and definitions for constructing NOLHDs. The Kronecker product of an  $n \times p$ matrix $\bA=(a_{ij})$ and an $m \times q$ matrix $\bB=(b_{ij})$ is
\begin{center}
$\bA \otimes \bB =\left [\renewcommand{\arraystretch}{0.9}
\begin{array}{cccc}
  a_{11}\bB   &  a_{12}\bB   &   \ldots  &   a_{1p}\bB  \\
  a_{21}\bB   &  a_{22}\bB   &   \ldots  &   a_{2p}\bB  \\
  \vdots    &  \vdots    &   \ddots  &   \vdots   \\
  a_{n1}\bB &  a_{n2}\bB &   \ldots  &   a_{np}\bB
\end{array}\right],$
\end{center}
\noindent where $a_{ij}\bB$ is an $m \times q$ matrix whose ($k,l$) entry is $a_{ij}b_{kl}$.
The correlation matrix of an $n\times p$ matrix $\bX=(x_{ij})$  is
\begin{equation}\label{eq:rhomat}
\brho = \left(
\begin{array}{rrrr}
\rho_{11} & \rho_{12} & \ldots & \rho_{1p}\\
\rho_{21} & \rho_{22} & \ldots & \rho_{2p}\\
\vdots & \vdots  &\ddots &\vdots\\
\rho_{p1} & \rho_{p2} & \ldots & \rho_{pp}\\
\end{array}
\right),
\end{equation}
\noindent where
\begin{equation}\label{eq:rho}
\rho_{ij}=\frac{\sum_{k=1}^n(x_{ki}-\bar{x}_i)(x_{kj}-\bar{x}_j)}{\sqrt{\sum(x_{ki}-\bar{x}_i)^2
\sum(x_{kj}-\bar{x}_j)^2}},
\end{equation}
\noindent represents the correlation between the $i$th and $j$th columns of $\bX$, $\bar{x}_i=n^{-1}\sum_{k=1}^n x_{ki}$ and $\bar{x}_j=n^{-1}\sum_{k=1}^n x_{kj}$. The matrix $\bX$ is  orthogonal if $\brho$ in (\ref{eq:rhomat}) is an identity matrix.

Let $\bD=(d_{ij})$ be an $n \times p$  random Latin hypercube in which each column is a random permutation of $1, \ldots, n$, and all columns are generated independently. Using $\bD$,  a random Latin hypercube design $\bZ=(z_{ij})$ on $[0,1]^p$ is generated through
\begin{align}\label{eq:lh1}
z_{ij}=\frac{d_{ij}-u_{ij}}{n},i=1,\ldots,n; j=1,\ldots,p,
\end{align}
where the $u_{ij}$'s are independent  uniform random variables on [0,1), and the $d_{ij}$'s and the $u_{ij}$'s are mutually independent.
If $\bZ$ needs to be defined on $[a,b]^p$ for general $a<b$, rescale $z_{ij}$ in (\ref{eq:lh1}) as
\begin{align}\label{eq: lh}
z_{ij} \leftarrow (b-a)z_{ij}+a.
\end{align}

We use $\NOLH(n, p)$ to denote an  $n \times p$  NOLHD.
For a pre-specified vector $\bt=(t_1,\ldots,t_q)$ with $0 \leq t_q \leq \cdots \leq t_1 \leq 1$,
the orthogonality of an NOLHD $\bX$ can be assessed by using
the {\em proportion correlation vector} given by
\begin{equation}\label{eq:delta}
\delta_{\bt}(\bX)= (\delta_{t_1}(\bX),\ldots,\delta_{t_q}(\bX)),
\end{equation}
\noindent where  $\delta_{t_k}(\bX)=\{p(p-1)\}^{-1}\sum_{i=1}^p\sum_{j \neq i}I(|\rho_{ij}|\leq t_k)$, $k=1,\ldots,q$, and $I(\cdot)$ is an indicator function.
For $k=1,\ldots,q$, this criterion computes the proportion of the $|\rho_{ij}|$'s not exceeding $t_k$. For two designs $\bX_1$ and $\bX_2$, $\bX_1$ is preferred over $\bX_2$ if
$\delta_{t_k}(\bX_1) > \delta_{t_k}(\bX_2)$ for $t_1,\ldots, t_q$.
For the Lasso problem, this new criterion has more
discriminating power
than the {\em maximum correlation} $\rho_m$ and {\em root average squared correlation} $\rho_{ave}$
criteria proposed
in Bingham, Sitter, and Tang (2009), where $\rho_m(\bX)=\hbox{max}_{i,j}|\rho_{ij}|$ and
and $\rho_{ave}(\bX)=\{\sum_{i<j}\rho_{ij}^2/[p(p-1)/2]\}^{1/2}$. Designs with similar values of $\rho_{m}$ and $\rho_{ave}$ may have different values of $\delta_{\bt}$.
For illustration, compare a randomly generated $64 \times 192$ i.i.d.~sample with an $\NOLH(64, 192)$ from Section \ref{sec: construction-oa}.
The former has $\rho_{ave}=0.124$ and $\rho_m=0.493$ and the latter has $\rho_{ave}=0.112$ and  $\rho_m=0.786$.
The two designs are indistinguishable in terms of $\rho_{ave}$.
But for $\bt=(0.1,0.05,0.01,0.005)$,
$\delta_{\bt}=(0.562,0.305,0.064,0.033)$ for the i.i.d. sample
and $\delta_{\bt}=(0.906,0.894,0.883,0.883)$ for the NOLHD, clearly indicating the superiority of the latter.

Sections \ref{sec: construction-oa} and \ref{sec: construction-gkp} present two systematic methods for constructing NOLHDs.
The first method was proposed by Lin, Mukerjee, and Tang (2009) and the second method
is a generalization of the method in Lin et al.\ (2010).
To assist readers in machine learning who may not be familiar with NOLHDs, we describe
these two methods in a self-contained fashion.
These two methods are easy to implement. Other construction methods for (nearly) orthogonal Latin hypercube designs include Owen (1994), Tang (1998), Ye (1998),  Steinberg and Lin (2006), Pang, Liu, and Lin (2009), and Sun, Liu, and Lin (2009, 2010), among others. However, they have run-size constraints and thus we do not consider here.
In the two constructions we will present,
an NOLHD with $n$ runs is obtained
from a Latin hypercube
in which the $n$ levels in each column are {\small $\{-(n-1)/2, \ldots, 0, \ldots,
(n-1)/2\}$} if $n$ is odd and {\small $\{-(n-1)/2,\ldots,-1/2,1/2,\ldots,(n-1)/2\}$} if $n$ is even.

\subsection{A Construction Method Using Orthogonal Arrays}\label{sec: construction-oa}

Lin, Mukerjee, and Tang (2009) proposed a method for constructing nearly orthogonal Latin hypercubes using orthogonal arrays.
Recall that an orthogonal array $\OA(n,p,s)$ of strength two is an $n \times p$ matrix with levels $1,\ldots,s$ such that, for any two columns,  all level combinations appear equally often (Hedayat, Sloane, and Stufken, 1999).
Let $\bA$ be an $\OA(s^2, 2f, s)$ and let $\bB = (b_{ij})$ be an $s \times p$ Latin hypercube.
This method works as follows.

\noindent {Step 1.} For $j = 1, \ldots, p$, obtain an $s^2 \times (2f)$
matrix $\bA_j$ from $\bA$ by replacing the symbols $1,   \ldots, s$ in
the latter by $b_{1j},   \ldots, b_{sj}$, respectively, and
 partition $\bA_j$ to  $\bA_{j1}, \ldots, \bA_{jf}$,
each of two columns.

\noindent {Step 2.} Let \[
{
\renewcommand{\arraystretch}{0.8}
\bV = \left[
\begin{array}{rr}
1 & -s \\
s & 1 \\
\end{array}
\right]. }
\] \noindent For $j = 1, \ldots, p$, obtain an $s^2 \times (2f)$
matrix $$\bM_j = [\bA_{j1}\bV, \ldots, \bA_{jf}\bV].$$

\noindent {Step 3.} For $n=s^2$ and $q=2pf$, define an $n\times q$ matrix $\bM$ = [$\bM_{1}, \ldots,
\bM_{p}$].

Lemma 1 from Lin, Mukerjee, and Tang (2009) captures the structure of $\bM$.

{\bf \sc Lemma 1.} (a) {\sl The matrix $\bM$ constructed above is an $s^2 \times (2pf)$ Latin hypercube.} \\ (b) {\sl The correlation matrix of $\bM$ is  $\brho(\bM) =
\brho(\bB) \otimes
\bI_{2f}$, where $\bI_{2f}$ is the identity matrix of order $2f$.}

Observe that the proportion correlation $\delta_{t_k}$ in (\ref{eq:delta}) of  $\bM$ is
\begin{equation}\label{eq:deltaM}
\delta_{t_k}(\bM)=[ p(2f-1)+(p-1)\delta_{t_k}(\bB) ] / (2pf-1), \ \hbox{for} \  k=1,\ldots,q.
\end{equation}

\begin{example}{Example}\label{eg:exam2}
Let $\bA$ be an $\OA(49, 8, 7)$ from Hedayat, Sloane, and Stufken (1999) and let $\bB$ be an $\NOLH(7,12)$ given by
$${\footnotesize
\left(
\begin{array}{rrrrrrrrrrrr}
  -3&   0&  -1 &  0&   3 &  3&   0&  -2&   1&  -3 & -1&  -3 \\[-2pt]
  -2&  -1&   1 & -3&  -1 & -3&   1&  -3&  -2&  -1 &  1&   3 \\[-2pt]
  -1&   3&   0 &  3&   0 & -2&   2&   0&  -1&   3 & -3&  -1 \\[-2pt]
   0&  -2&   3 &  2&  -2 &  2&  -2&   1&  -3&   1 &  2&  -2 \\[-2pt]
   1&   1&  -3 & -1&  -3 &  1&  -3&  -1&   3&   2 &  0&   1 \\[-2pt]
   2&  -3&  -2 &  1&   1 & -1&   3&   2&   2&   0 &  3&   0 \\[-2pt]
   3&   2&   2 & -2&   2 &  0&  -1&   3&   0&  -2 & -2&   2 \\
\end{array}\right),}$$
\noindent where $\rho_{ave}(\bB)=0.3038$, $\rho_m(\bB)=0.9643$, and $\delta_{\bt}(\bB)=(\delta_{0.1},\delta_{0.05},\delta_{0.01},
\delta_{0.005})$=\\$(0.500,0.364,0.136,0.136)$. Here, the matrix $\bM$ from Lemma~1 is an $\NOLH(49,96)$ with $\rho_{ave}(\bM)=0.1034$ and
$\rho_m(\bM)=0.9643$. From (\ref{eq:deltaM}), $\delta_{\bt}(\bM)=(\delta_{0.1},\delta_{0.05},\delta_{0.01},\\
\delta_{0.005})$=$(0.942,0.926,0.9,0.9)$.
In general, if $\bB$ is an $\NOLH(7,p)$, Lemma~1 gives an $\NOLH(49,8p)$.
\end{example}

\subsection{A Construction Method Using the Kronecker Product}\label{sec: construction-gkp}

We now propose a generalization of the method in Lin et al.\ (2010) for constructing NOLHDs.
This generalization provides designs with better low-dimensional projection properties than those obtained
in Lin et al.\ (2010).

For $j=1,
\ldots, m_2$, let $\bC_j=(c_{ik}^j)$  be an $n_1 \times m_1$ Latin hypercube and let
$\bA_j=(a_{ik}^{j})$ be an $n_1\times m_1$ matrix with entries $\pm
1$.  Let $\bB=(b_{ij})_{n_2 \times m_2}$
be an $n_2 \times m_2$ Latin hypercube,
let $\bD=(d_{ij})_{n_2 \times m_2}$ be a matrix with entries $\pm 1$,
and let $r$ be a real number. Our proposed method  constructs\\[-0.3in]

\begin{eqnarray}\label{gene2}
 \bM   &=\left[
 \begin{array}{rrrr}
b_{11}\bA_1+r d_{11}\bC_1 & b_{12}\bA_2+r d_{12}\bC_2 & \ldots &  b_{1m_2}\bA_{m_2}+r d_{1m_2}\bC_{m_2}  \\
b_{21}\bA_1+r d_{21}\bC_1 & b_{22}\bA_2+r d_{22}\bC_2 & \ldots &  b_{2m_2}\bA_{m_2}+r d_{2m_2}\bC_{m_2} \\
\vdots             & \vdots             &\ddots & \vdots  \\
b_{n_21}\bA_1+r d_{n_21}\bC_1 &  b_{n_22}\bA_2+r d_{n_22}\bC_2 & \ldots &  b_{n_2m_2}\bA_{m_2}+r d_{n_2m_2}\bC_{m_2}\\
\end{array}
 \right].
\end{eqnarray}

In contrast, the method in Lin et al.\ (2010) constructs
\begin{equation}\label{eq1}
\bL=\bA\otimes \bB + r \bC\otimes \bD,
\end{equation}
\noindent where $\bA=(a_{ij})_{n_1 \times m_1}$ is a matrix with entries $\pm 1$,
$\bC=(c_{ij})_{n_1 \times m_1}$ is an $n_1 \times m_1$ Latin hypercube, and $\bB$, $\bD$ and $r$ are as in (\ref{gene2}).
Lin et al.\ (2010) provided the conditions for $\bL$ to be a nearly orthogonal Latin hypercube.  When projected onto some pairs of predictor variables, points in the design in (\ref{eq1}) lie on straight lines,
which may not be desirable for the Lasso problem. Such projection patterns are due to the use of the same $\bA$ and the same $\bC$ for each entry of $\bB$ and $\bD$ in (\ref{eq1}).
The generalization in (\ref{gene2}) uses different $\bA_j$'s and $\bC_j$'s to eliminate this undesirable projection pattern.
Proposition~1 establishes conditions for $\bM$ in (\ref{gene2}) to be a Latin hypercube.

\begin{proposition}
Let $r = n_2$. Then the
design $\bM$ in (\ref{gene2}) is a Latin hypercube if \\
one of the following two conditions holds: \\
\indent
(a) For $j=1,\ldots,m_2$, the $\bA_j$ and $\bC_j$ satisfy that for $i=1,\ldots,m_1$,
$c_{pi}^j = - c_{p'i}^j$ and $a_{pi}^j = a_{p'i}^j$ hold simultaneously. \\
\indent (b) For $k=1,\ldots,m_2$, the entries of $\bB$ and $\bD$ satisfy the condition that
$b_{qk} = - b_{q'k}$ and $d_{qk} = d_{q'k}$ hold simultaneously.
\end{proposition}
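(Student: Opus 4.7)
The goal is to verify that each of the $m_1 m_2$ columns of $\bM$ in (\ref{gene2}) is a permutation of the required set of $n = n_1 n_2$ Latin-hypercube levels. Because the choice $r = n_2$ makes the representation of any level $\ell$ as $\ell = b + n_2 c$, with $b$ drawn from the $n_2$-level set and $c$ from the $n_1$-level set, unique within the $n$-level set, it suffices to prove that the $n_1 n_2$ entries in any fixed column are pairwise distinct; a count then upgrades distinctness to a full permutation.

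Fix a column of $\bM$ indexed by $(k, i')$, whose entries are $m_{(j,i)} = b_{jk}\, a_{ii'}^{k} + n_2\, d_{jk}\, c_{ii'}^{k}$ for $(j,i) \in \{1,\ldots,n_2\} \times \{1,\ldots,n_1\}$. The first step is to assume $m_{(j_1,i_1)} = m_{(j_2,i_2)}$ and rearrange to
\[
b_{j_1 k}\, a_{i_1 i'}^{k} - b_{j_2 k}\, a_{i_2 i'}^{k} \;=\; n_2 \bigl(d_{j_2 k}\, c_{i_2 i'}^{k} - d_{j_1 k}\, c_{i_1 i'}^{k}\bigr).
\]
The left-hand side has absolute value at most $n_2 - 1$ while the right-hand side is a multiple of $n_2$, so both sides vanish. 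This yields the two scalar identities $b_{j_1 k}\, a_{i_1 i'}^{k} = b_{j_2 k}\, a_{i_2 i'}^{k}$ and $d_{j_1 k}\, c_{i_1 i'}^{k} = d_{j_2 k}\, c_{i_2 i'}^{k}$.

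The next step is to promote these scalar identities to index equalities. Since $\bB$ is a Latin hypercube, $|b_{j_1 k}| = |b_{j_2 k}|$ forces either $j_1 = j_2$ or $b_{j_1 k} = -b_{j_2 k} \neq 0$; likewise for $\bC_k$ and $i_1, i_2$. Under condition (a), the case $j_1 = j_2$ reduces the $c$-identity to $c_{i_1 i'}^{k} = c_{i_2 i'}^{k}$, giving $i_1 = i_2$; in the case $j_1 \neq j_2$, the sign flip in the $b$-identity yields $a_{i_1 i'}^{k} = -a_{i_2 i'}^{k}$, whereas condition (a) pairs rows of $\bC_k$ whose entries flip sign with rows of $\bA_k$ whose entries agree, forcing $a_{i_1 i'}^{k} = 0$, contradicting the $\pm 1$ requirement. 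Condition (b) is handled by the symmetric argument: the pairing on $\bB$ and $\bD$ is used to collapse the $d$-identity first, after which the $b$-identity produces the analogous contradiction.

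In my view, the main obstacle is the careful bookkeeping around zero entries, which arise when $n_1$ or $n_2$ is odd, since an equation such as $b_{j_1 k}\, a_{i_1 i'}^{k} = 0$ by itself places no constraint on $a_{i_1 i'}^{k}$. These degenerate cases are resolved by invoking Latin-hypercube uniqueness of a zero entry in its column to force $j_1 = j_2$ (or $i_1 = i_2$) directly, after which the remaining identity collapses as above. A final parity check on $n_1, n_2$ confirms that the column entries lie in the correct integer or half-integer level set, completing the passage from distinctness to a full permutation.
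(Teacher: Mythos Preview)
Your argument is correct. The paper does not actually prove Proposition~1; it simply remarks that the result ``can be verified by using an argument similar to the proof of Lemma~1 in Lin et al.\ (2010) and thus is omitted.'' Your base-$n_2$ decomposition---bounding $|b_{j_1k}a_{i_1i'}^k - b_{j_2k}a_{i_2i'}^k| \le n_2-1$ against an integer multiple of $n_2$ to split the collision into the two scalar identities, and then using the fold-over condition~(a) or~(b) to reach a contradiction in the nontrivial case---is precisely the standard route taken in that reference (specialized there to $\bA_j\equiv\bA$, $\bC_j\equiv\bC$), so your proof is essentially what the paper is pointing to.

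One small point of compression worth making explicit: in the case $j_1\neq j_2$ under condition~(a) you invoke the $\bC_k$-pairing to get $a_{i_1i'}^k=a_{i_2i'}^k$ without first recording that $c_{i_1i'}^k=-c_{i_2i'}^k$. This follows because $a_{i_1i'}^k=-a_{i_2i'}^k$ (from the $b$-identity and $a\in\{\pm1\}$) already rules out $i_1=i_2$, after which the Latin-hypercube property of $\bC_k$ together with $|c_{i_1i'}^k|=|c_{i_2i'}^k|$ forces the sign flip needed to trigger condition~(a). With that intermediate step stated, the argument is complete.
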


Proposition~1 can be verified by using an argument similar to
the proof of Lemma~1 in Lin et al.\ (2010) and thus is omitted.  Proposition~2 studies the orthogonality of $\bM$ in terms of $\bA_j$'s, $\bB$, $\bC_j$'s and $\bD$.

\begin{proposition}
Suppose $\bA_j$'s, $\bB$, $\bC_j$'s, $\bD$ and $r$ in (\ref{gene2}) satisfy  condition (a) or (b) in Proposition~1 and $\bM$ in (\ref{gene2}) is a Latin hypercube.
In addition, assume that $\bA_j$s, $\bB$, and $\bD$ are orthogonal and
that $\bB^T\bD = 0$ or $\bA_j^T\bC_j = 0$ holds for all $j$s. Then we have that\\
(a) $\rho_m(\bM)=\hbox{Max}\{w_1\rho_m(\bC_{j}),
  j=1, \ldots, m_2\}$, where $w_1= n_2^2(n_1^2-1)/(n_1^2n_2^2-1)$.\\
(b) $\rho_{ave}(\bM)=\sqrt{w_2\sum_{j=1}^{m_2}\rho_{ave}^2(\bC_{j})/m_2}$,
  where
  $w_2=(m_1-1)w_1^2/(m_1m_2-1)]$. \\
(c) $\delta_{t_k}(\bM) \geq \sum_{j=1}^{m_2}\delta_{t_k}(\bC_j)/m_2$ for $k=1,\ldots,q$.\\
 (d) The matrix $\bM$ is orthogonal if and only if $\bC_{1}, \ldots, \bC_{m_2}$
  are all orthogonal.
\end{proposition}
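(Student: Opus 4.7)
The plan is to compute the entries of $\brho(\bM)$ explicitly and then deduce all four parts by elementary counting. The key tool is the mixed‐product identity $\langle \mathbf{u}_1 \otimes \mathbf{v}_1, \mathbf{u}_2 \otimes \mathbf{v}_2 \rangle = \langle \mathbf{u}_1, \mathbf{u}_2 \rangle \langle \mathbf{v}_1, \mathbf{v}_2 \rangle$. Writing the column of $\bM$ indexed by block $j$ and within-block position $k$ as $\mathbf{b}_j \otimes \mathbf{a}_{jk} + r\,\mathbf{d}_j \otimes \mathbf{c}_{jk}$, where $\mathbf{a}_{jk}, \mathbf{c}_{jk}, \mathbf{b}_j, \mathbf{d}_j$ denote the appropriate columns of $\bA_j, \bC_j, \bB, \bD$, the inner product of any two columns of $\bM$ expands into four scalar products, each of the form $\langle\cdot,\cdot\rangle\cdot\langle\cdot,\cdot\rangle$.

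First I would compute the common squared norm of a column, using $\|\mathbf{a}_{jk}\|^2 = n_1$, $\|\mathbf{d}_j\|^2 = n_2$, and the Latin-hypercube column sums of squares $n_2(n_2^2-1)/12$ for $\bB$ and $n_1(n_1^2-1)/12$ for $\bC_j$. The choice $r=n_2$ together with the hypothesis on $\bB^T \bD$ or $\bA_j^T \bC_j$ eliminates the cross term, yielding the Latin-hypercube value $n_1 n_2(n_1^2 n_2^2 - 1)/12$.

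Next I would classify column pairs into two types. For pairs in different blocks ($j_1\ne j_2$), orthogonality of $\bB$ and of $\bD$ kills the two $\mathbf{b}$-$\mathbf{b}$ and $\mathbf{d}$-$\mathbf{d}$ terms, and the hypothesis on $\bB^T\bD$ or $\bA_j^T\bC_j$ kills the two cross terms, so the correlation vanishes. For pairs in the same block ($j_1=j_2=j$, $k_1\ne k_2$), orthogonality of $\bA_j$ kills one term, the cross-term hypothesis kills two, and the only surviving term $r^2\|\mathbf{d}_j\|^2\langle\mathbf{c}_{jk_1},\mathbf{c}_{jk_2}\rangle$ divided by the common column squared norm gives $\rho_{k_1 k_2}(\bM) = w_1\,\rho_{k_1 k_2}(\bC_j)$ with $w_1 = n_2^2(n_1^2-1)/(n_1^2 n_2^2 - 1)$. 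From this correlation pattern, (a) and (d) are immediate; (b) follows by summing squared correlations over the $m_2\binom{m_1}{2}$ same-block pairs and dividing by $\binom{m_1 m_2}{2}$, which produces $w_2 = (m_1-1)w_1^2/(m_1 m_2 - 1)$; and (c) uses $w_1\le 1$ (since $n_2\ge 1$) to conclude that $I(|w_1\rho_{k_1 k_2}(\bC_j)|\le t_k)\ge I(|\rho_{k_1 k_2}(\bC_j)|\le t_k)$, so combining the same-block lower bound with the full contribution from the $m_1^2 m_2(m_2-1)$ between-block pairs (all of which trivially satisfy the threshold) and simplifying yields $\delta_{t_k}(\bM)\ge \sum_j \delta_{t_k}(\bC_j)/m_2$.

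The main obstacle is making the cross-term argument watertight under both alternatives in the disjunctive hypothesis. Under $\bB^T\bD = 0$ both cross terms vanish uniformly in $j_1,j_2$ through the $\mathbf{b}$-$\mathbf{d}$ factor. Under $\bA_j^T\bC_j = 0$, only the within-block $\mathbf{a}$-$\mathbf{c}$ inner products vanish for free, so the between-block cross terms must be handled by invoking the sign-pairing structure inherited from Proposition~1 with a pairing shared across $j$, giving $\mathbf{a}_{j_1 k_1}^T \mathbf{c}_{j_2 k_2} = 0$ for every $j_1,j_2,k_1,k_2$. Once this single point is settled, the rest of the proof is routine linear algebra and counting.
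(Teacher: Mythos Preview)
Your approach is essentially identical to the paper's: both expand the inner product of two columns of $\bM$ into four scalar products, use the orthogonality hypotheses to reduce to the single $\mathbf{c}$--$\mathbf{c}$ term, obtain $\rho(\bM_{jk},\bM_{j'k'})=0$ for $j\neq j'$ and $w_1\rho_{kk'}(\bC_j)$ for $j=j'$, and then read off (a)--(d) by counting; your Kronecker-product phrasing is just a repackaging of the paper's double-index sums. If anything, you are more careful than the paper in isolating the between-block cross terms under the $\bA_j^T\bC_j=0$ alternative --- the paper simply asserts the reduction from four terms to two --- and your proposed fix via the foldover pairing of Proposition~1 is the right idea (observe also that under condition~(b) the pairing already forces $\bB^T\bD=0$, which disposes of that branch directly).
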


\begin{proof}
Let $\bM_{jk}$ and $\bM_{j'k'}$ be the $[(j-1)m_2+k]$th and $[(j'-1)m_2+k']$th columns of $\bM$ in (\ref{gene2}), respectively. Take $n=n_1n_2$. Let
$\rho(\bM_{jk},\bM_{j'k'})$ be the correlation between  $\bM_{jk}$ and
 $\bM_{j'k'}$  defined in (\ref{eq:rho}). Express ${\footnotesize 12^{-1}n(n^2-1)\rho(\bM_{jk},\bM_{j'k'})}$ as $$\sum_{i_1=1}^{n_2}
\sum_{i_2=1}^{n_1}(b_{i_1j}a_{i_2k}^j+n_2d_{i_1j}c_{i_2k}^j)
(b_{i_1j'}a_{i_2k'}^{j'}+n_2d_{i_1j'}c_{i_2k'}^{j'}),$$ \noindent which equals
{\footnotesize
\begin{eqnarray*}
&&
\sum_{i_1=1}^{n_2}b_{i_1j}b_{i_1j'}\sum_{i_2=1}^{n_1}a_{i_2k}^ja_{i_2k'}^{j'}+
n_2\sum_{i_1=1}^{n_2}d_{i_1j}b_{i_1j'}\sum_{i_2=1}^{n_1}c_{i_2k}^ja_{i_2k'}^{j'}\\
& & \quad
+n_2\sum_{i_1=1}^{n_2}b_{i_1j}d_{i_1j'}\sum_{i_2=1}^{n_1}a_{i_2k}^jc_{i_2k'}^{j'}+
n_2^2\sum_{i_1=1}^{n_2}d_{i_1j}d_{i_1j'}\sum_{i_2=1}^{n_1}c_{i_2k}^jc_{i_2k'}^{j'}\\
&=&
\sum_{i_1=1}^{n_2}b_{i_1j}b_{i_1j'}\sum_{i_2=1}^{n_1}a_{i_2k}^ja_{i_2k'}^{j'}
+n_2^2\sum_{i_1=1}^{n_2}d_{i_1j}d_{i_1j'}\sum_{i_2=1}^{n_1}c_{i_2k}^jc_{i_2k'}^{j'}.\\
\end{eqnarray*}}
\noindent Thus, $\rho(\bM_{jk},\bM_{j'k'})$ is zero for $j \neq j'$ and is
 $ n_2^2(n_1^2-1)\rho_{kk'}(\bC_j)/(n^2-1)$ for $j=j'$ and $k\neq k'$.
 By the definitions of $\rho_m$ and $\rho_{ave}$, the results in (a) and (b) hold.  Note that for $k=1,\ldots,q$,
\begin{eqnarray*}
\delta_{t_k}(\bM)&=&\{m_2(m_2-1)m_1^2+m_1(m_1-1)\sum_{j=1}^{m_2}\delta_{t_k}(\bC_j)\}/\{m_1m_2(m_1m_2-1)\}\\
&=&\sum_{j=1}^{m_2}\delta_{t_k}(\bC_j)/m_2+[(m_2-1)m_1\{1- \sum_{j=1}^{m_2}\delta_{t_k}(\bC_j)/m_2\}]/(m_1m_2-1).
 \end{eqnarray*}
The result in (c) now follows because $\sum_{j=1}^{m_2}\delta_{t_k}(\bC_j)/m_2\leq 1$.
By (a),
(b) and (c), (d) is evident. This completes the proof.
\end{proof}

Proposition~2 expresses the near orthogonality of $\bM$ in
(\ref{gene2}) in terms of that of $\bC_j$'s and establishes conditions for $\bA_j$, $\bB$ and $\bD$ in order for $\bM$ to be an orthogonal Latin hypercube.
The required matrices in Proposition~2 can be chosen as follows.
First, orthogonal matrices $\bA_j$'s and $\bD$  are readily available from Hadamard matrices when $n_1$ and $n_2$ are multiples of four.
Second, orthogonal Latin hypercubes $\bB$ are available from Pang, Liu, and Lin (2009), Lin, Mukerjee, and Tang (2009), Lin et al.\ (2010),  among others.
If $\bA_j$, $\bB$ and $\bD$ are orthogonal, and either $\bB^T\bD=0$ or $\bA_j^T\bC_j=0$, then $\bM$ is orthogonal when $\bC_j$'s are orthogonal Latin hypercubes.
If $\bC_j$'s are NOLHDs like those from  Lin, Mukerjee, and Tang (2009) and Lin et al.\ (2010), then $\bM$ is nearly orthogonal.
If $\bC_1$ is an NOLHD, $\bC_2, \ldots, \bC_{m_{2}}$ can be obtained by permuting the rows of $\bC_1$.

\begin{example}{Example}\label{eg:exam1}
Let
{\footnotesize $$ \bB=\frac{1}{2}\left(
\begin{array}{rrrr}
 1 & -3&  7&  5    \\[-2pt]
 3 &  1&  5& -7    \\[-2pt]
 5 & -7& -3& -1    \\[-2pt]
 7 &  5& -1&  3    \\[-2pt]
-1 &  3& -7& -5    \\[-2pt]
-3 & -1& -5&  7    \\[-2pt]
-5 &  7&  3&  1    \\[-2pt]
-7 & -5&  1& -3    \\
\end{array}\right),
\bD=\left(
\begin{array}{rrrr}
 1&  1&  1&  1   \\[-2pt]
 1&  1& -1& -1   \\[-2pt]
 1& -1&  1& -1   \\[-2pt]
 1& -1& -1&  1   \\[-2pt]
-1&  1&  1&  1   \\[-2pt]
-1&  1& -1& -1   \\[-2pt]
-1& -1&  1& -1   \\[-2pt]
-1& -1& -1&  1   \\
\end{array}\right),$$}
{\footnotesize
$$\bA_1=\left(
\begin{array}{rrrrrr}
    1 &    1&     1 &    1  &   1  &   1 \\[-2pt]
    1 &   -1&     1 &    1  &   1  &  -1 \\[-2pt]
   -1 &    1&    -1 &    1  &   1  &   1 \\[-2pt]
   -1 &   -1&     1 &   -1  &   1  &   1 \\[-2pt]
    1 &   -1&    -1 &    1  &  -1  &   1 \\[-2pt]
   -1 &    1&    -1 &   -1  &   1  &  -1 \\[-2pt]
   -1 &   -1&     1 &   -1  &  -1  &   1 \\[-2pt]
   -1 &   -1&    -1 &    1  &  -1  &  -1 \\[-2pt]
    1 &   -1&    -1 &   -1  &   1  &  -1 \\[-2pt]
    1 &    1&    -1 &   -1  &  -1  &   1 \\[-2pt]
    1 &    1&     1 &   -1  &  -1  &  -1 \\[-2pt]
   -1 &    1&     1 &    1  &  -1  &  -1 \\
\end{array}\right), \hbox{ and }
\bC_1=\frac{1}{2}\left(
\begin{array}{rrrrrr}
 -11 & -9 &  9&  11&   5 &  1   \\[-2pt]
  -9 &  5 & -1&  -5&  -9 & 11   \\[-2pt]
  -7 & 11 & -3&   3&   1 & -7   \\[-2pt]
  -5 & -1 & -9&  -9&  -1 & -9   \\[-2pt]
  -3 & -7 &  5& -11&   7 & -1   \\[-2pt]
  -1 &  9 & -7&   5&   9 &  5   \\[-2pt]
   1 & -3 &  7&  -7&  -7 &  3   \\[-2pt]
   3 &-11 &-11&   9& -11 & -3   \\[-2pt]
   5 &  7 & 11&   7&  -5 & -5   \\[-2pt]
   7 & -5 & -5&   1&  11 &  7   \\[-2pt]
   9 &  1 &  3&  -3&   3 &-11   \\[-2pt]
  11 &  3 &  1&  -1&  -3 &  9   \\
\end{array}\right).$$}
\noindent For $j=2,3,4$, obtain $\bA_j$ and $\bC_j$ by permuting the rows of $\bA_1$ and $\bC_1$, respectively. Using the above matrices,
$\bM$ in (\ref{gene2}) is a $96 \times 24$ orthogonal Latin hypercube.
\end{example}

\begin{example}{Example}\label{eg:exam3}
Let $\bC_1$ be an $\NOLH(25,24)$ constructed  by  Lemma~1
using an $\OA(25,6,5)$ from Hedayat, Sloane, and Stufken (1999) and an $\NOLH(5,4)$.
Permute the rows of $\bC_1$ to get an NOLHD $\bC_2$. Generate two $25 \times 24$ nearly orthogonal matrices, $\bA_1$ and $\bA_2$,
by using the Gendex DOE software associated with Nguyen (1996).
Using
$$
\bB=\left(
\begin{array}{rr}
\frac{1}{2}  & -\frac{1}{2} \\
-\frac{1}{2}  & \frac{1}{2}
\end{array}
\right)
\hbox{ and }
\bD=\left(\begin{array}{rr}
1  & 1\\
1 & 1
\end{array}
\right),
$$
\noindent $\bM$ in (\ref{gene2}) is an $\NOLH(50, 48)$.
\end{example}

\section{Numerical Illustration}\label{sec: Simulation}

In this section we provide numerical examples to compare the number of false selections $\gamma$ in \eqref{eq: false_selection}
with four different types of design matrices. Method I uses an NOLHD from Section \ref{sec: OLHD-construction}.
Method II uses a two-level design at levels $\pm (n-1)/2$. If $p>n-1$, a two-level design is often called a supersaturated design (Lin, 1993; Wu, 1993).
Method III uses a random Latin hypercube design (RLHD) constructed in \eqref{eq: lh}.
Method IV uses an i.i.d.~sample.
Denote by $\gamma_{NOLHD}$, $\gamma_{FD}$, $\gamma_{RLHD}$ and $\gamma_{IID}$ the $\gamma$ values of these methods, respectively.
Since the focus here is to compare the effect of the
regression matrix $\bX$ on the accuracy of the Lasso solution,  the response vector $\bm y$ from the model in \eqref{eq: regression} is generated with the same $\bm \epsilon = (\epsilon_{1}, \ldots, \epsilon_{n})^{T}$ for the four methods.
The tuning parameter $\lambda$ in \eqref{eq: Lasso} is selected by the five-fold cross-validation.
The package \textit{lars} (Efron, Hastie, Johnstone, and Tibshirani, 2003) in \textit{R} (R, 2010)
is used to compute the Lasso solution $\hat{\bm \beta}$ in \eqref{eq: Lasso}.
Examples below have different $p/n$ ratios.

\begin{example}{Example}\label{eg: example-2}
For the model in \eqref{eq: regression}, let $p=48$, $\sigma =8$, and $\bm \beta = (0.8,1.0, \ldots, 3, 0, \ldots, 0)^{T}$ with the last $36$ coefficients being zero.
Take $n = 50$ with $n \approx p$.
Method I takes the $\NOLH(50,48)$ in Example \ref{eg:exam3}. Method II uses a $50 \times 48$ nearly orthogonal two-level design from
the Gendex software based on the algorithm in Nguyen (1996).
Table \ref{tab: Fsel-Eg2} compares three quartiles of the $\gamma_{NOLHD}$, $\gamma_{FD}$, $\gamma_{RLHD}$ and $\gamma_{IID}$ values over 50 replications.
Fig. \ref{fig: boxplot-eg2} depicts the boxplots of $\gamma$ values of these methods.
Table \ref{tab: Fsel-Eg2} and Fig. \ref{fig: boxplot-eg2} clearly
indicate that $\gamma_{NOLHD}$ is smaller than $\gamma_{FD}$, $\gamma_{RLHD}$ and $\gamma_{IID}$.
\end{example}

\begin{table}[h]
\centering {\caption{Three quartiles of the $\gamma_{NOLHD}$, $\gamma_{FD}$, $\gamma_{RLHD}$ and $\gamma_{IID}$ values over 50 replications for Example \ref{eg: example-2}}\label{tab: Fsel-Eg2}}
\begin{tabular}{l|cccc}
\hline  &  NOLHD & FD & RLHD & IID \\
\hline
   median          &13.00         &18.00         &18.00             &20.00 \\
   1st quartile    &12.00         &14.00         &15.00             &16.00 \\
   3rd quartile    &15.00         &21.00          &23.00            &23.00  \\
 \hline
\end{tabular}
\end{table}

\begin{figure}[h!t]
\centering \resizebox{300pt}{250pt}
{\includegraphics{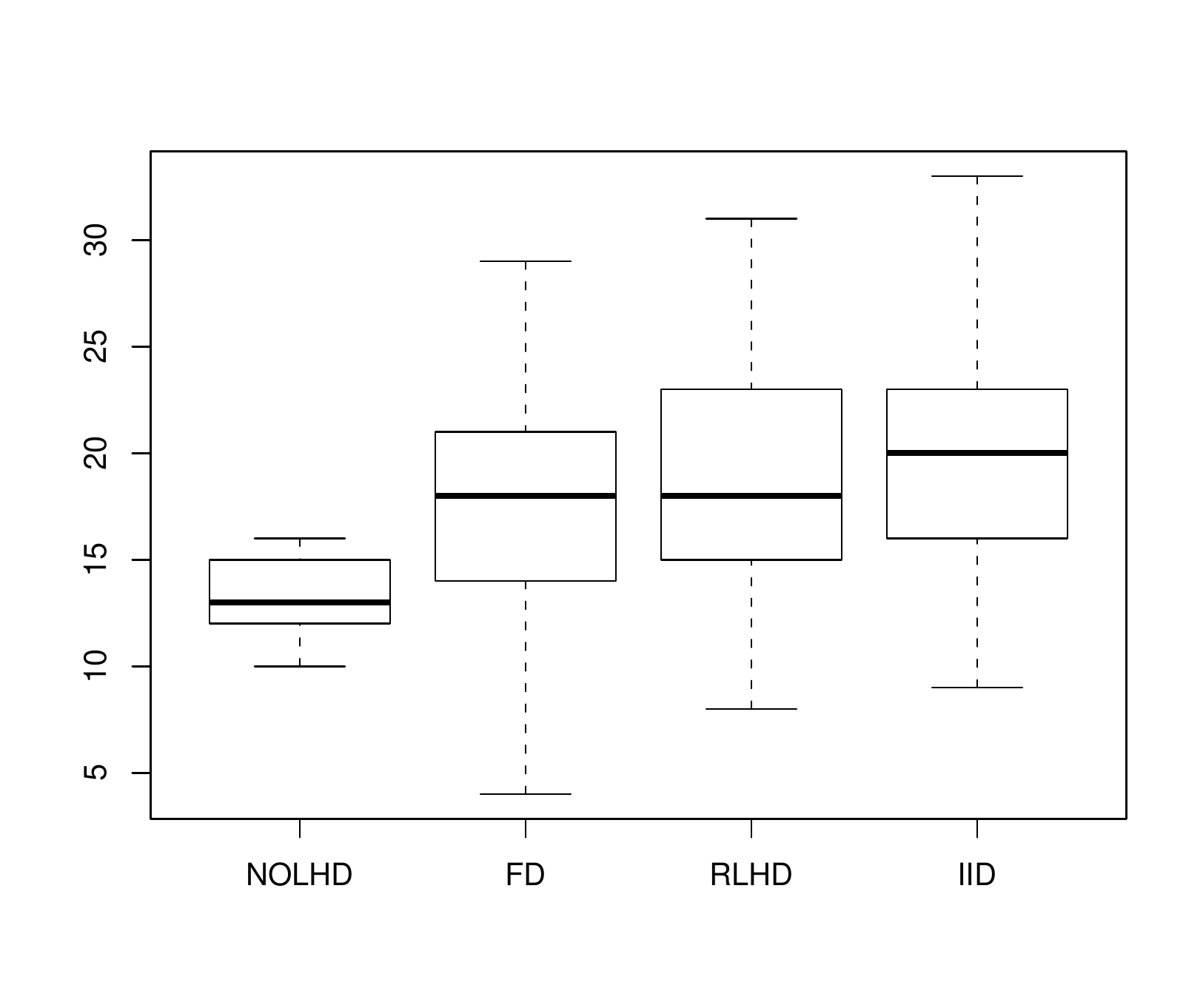}} \caption{Boxplots of the $\gamma_{NOLHD}$, $\gamma_{FD}$, $\gamma_{RLHD}$ and $\gamma_{IID}$ values over the 50 replications for Example \ref{eg: example-2}.}\label{fig: boxplot-eg2}
\end{figure}

\begin{example}{Example} \label{eg: example-3}
For the model in \eqref{eq: regression}, let $p=96$, $\sigma = 8$ and $\bm \beta = (0.2,0.4, \ldots, 3, 0, \ldots, 0)^{T}$
with the last $81$ coefficients being zero. Take $n=49$ with $p > n$.
Method I uses the $\NOLH(49,96)$ in Example \ref{eg:exam2}.  Method II uses an $E(s^2)$-optimal supersaturated design from the Gendex software associated with Nguyen (1996). Table \ref{tab: Fsel-Eg3} compares three quartiles of the $\gamma_{NOLHD}$, $\gamma_{FD}$, $\gamma_{RLHD}$ and $\gamma_{IID}$ values over 50 replications.
Fig. \ref{fig: boxplot-eg3} depicts the boxplots of $\gamma$ values of these methods.
Table \ref{tab: Fsel-Eg3} and Fig. \ref{fig: boxplot-eg3} show that $\gamma_{NOLHD}$, once more, significantly outperforms $\gamma_{FD}$, $\gamma_{RLHD}$ and $\gamma_{IID}$.
\end{example}

\begin{table}[h]
\centering {\caption{Three quartiles of the $\gamma_{NOLHD}$, $\gamma_{FD}$, $\gamma_{RLHD}$ and $\gamma_{IID}$ values over 50 replications for Example \ref{eg: example-3}.}\label{tab: Fsel-Eg3}}
\begin{tabular}{l|cccc}
\hline &  NOLHD & FD & RLHD & IID \\
\hline
  median         &17.50         &27.00           &25.00          &27.00   \\
  1st quartile    &15.00         &24.00         &22.25         &23.25  \\
  3rd quartile    &22.75         &30.00         &28.00         &29.00 \\
\hline
\end{tabular}
\end{table}

\begin{figure}[h!t]
\centering \resizebox{300pt}{250pt}
{\includegraphics{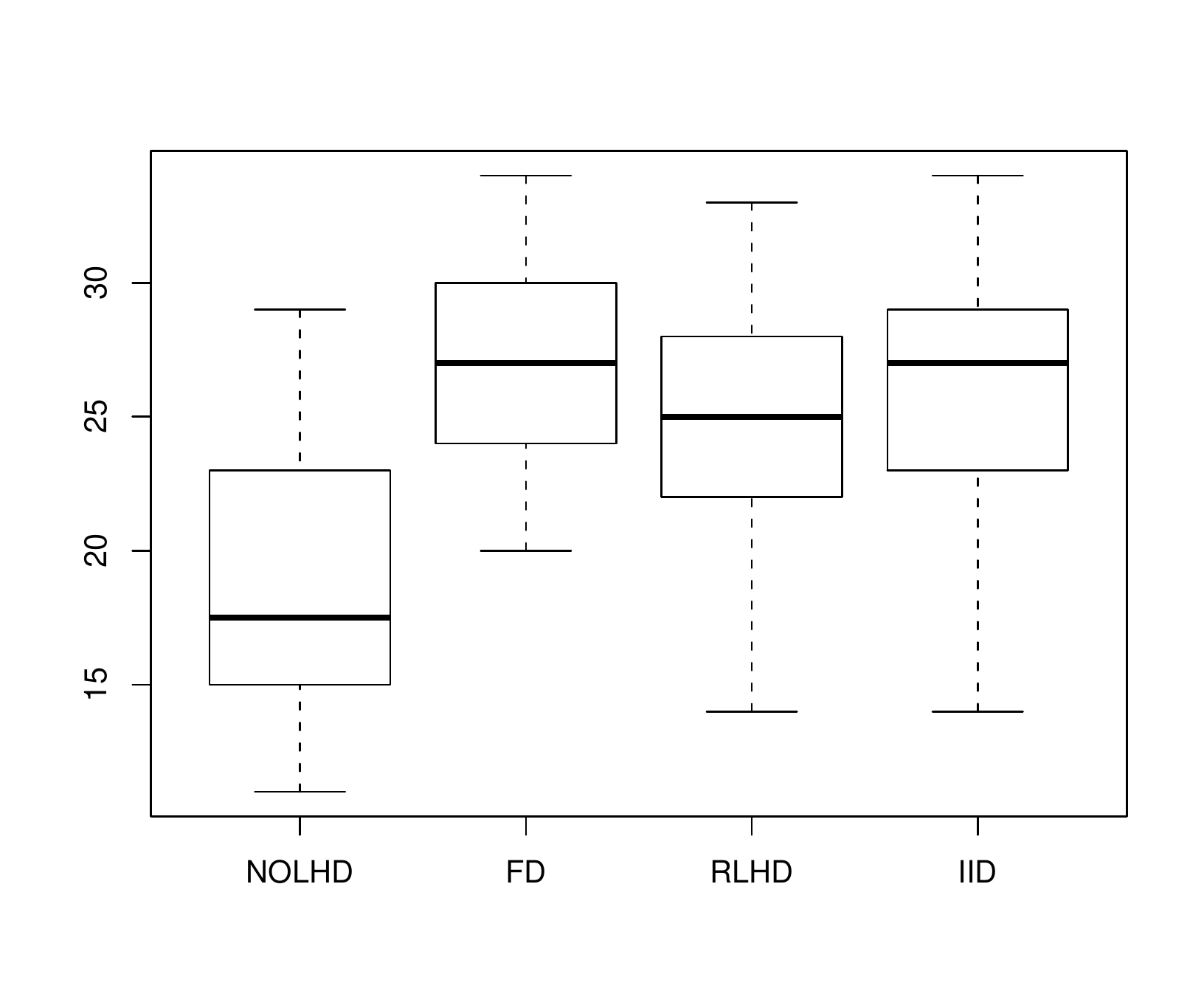}} \caption{Boxplots of the $\gamma_{NOLHD}$, $\gamma_{FD}$, $\gamma_{RLHD}$ and $\gamma_{IID}$ values over the 50 replications for Example \ref{eg: example-3}.}\label{fig: boxplot-eg3}
\end{figure}

\begin{example}{Example} \label{eg: example-4}
For the model in \eqref{eq: regression}, let $p=192$, $\sigma = 8$ and $\bm \beta = (0.05,0.2, \ldots, 3, 0, \ldots, 0)^{T}$
with the last $172$ coefficients being zero.
Take $n = 64$ with $p>n$. Method I uses an $\NOLH(64, 192)$ from Lemma~1 in Section \ref{sec: construction-oa}. Method II uses an $E(s^2)$-optimal supersaturated design from
the Gendex software associated with Nguyen (1996).
Table \ref{tab: Fsel-Eg4} compares three quartiles of the $\gamma_{NOLHD}$, $\gamma_{FD}$, $\gamma_{RLHD}$ and $\gamma_{IID}$ values over 50 replications.
Fig.  \ref{fig: boxplot-eg4} depicts the boxplots of $\gamma$ values for these methods, where $\gamma_{NOLHD}$ is much smaller than $\gamma_{FD}$, $\gamma_{RLHD}$ and $\gamma_{IID}$.
This example clearly demonstrates that the use of an NOLHD
leads to significant improvement of the Lasso solution.
 \end{example}

\begin{table}[h]
\centering {\caption{Three quartiles of the $\gamma_{NOLHD}$, $\gamma_{FD}$, $\gamma_{RLHD}$ and $\gamma_{IID}$ values over 50 replications in Example \ref{eg: example-4}. }\label{tab: Fsel-Eg4}}
\begin{tabular}{l|cccc}
\hline &  NOLHD &FD & RLHD & IID \\
\hline
  median         &27.00           &42.50       &43.00         &41.00  \\
  1st quartile   &23.00          &40.00        &34.25         &34.00 \\
  3rd quartile   &33.00          &46.00        &45.00         &45.00 \\
\hline
\end{tabular}
\end{table}

\begin{figure}[h!t]
\centering \resizebox{300pt}{250pt}
{\includegraphics{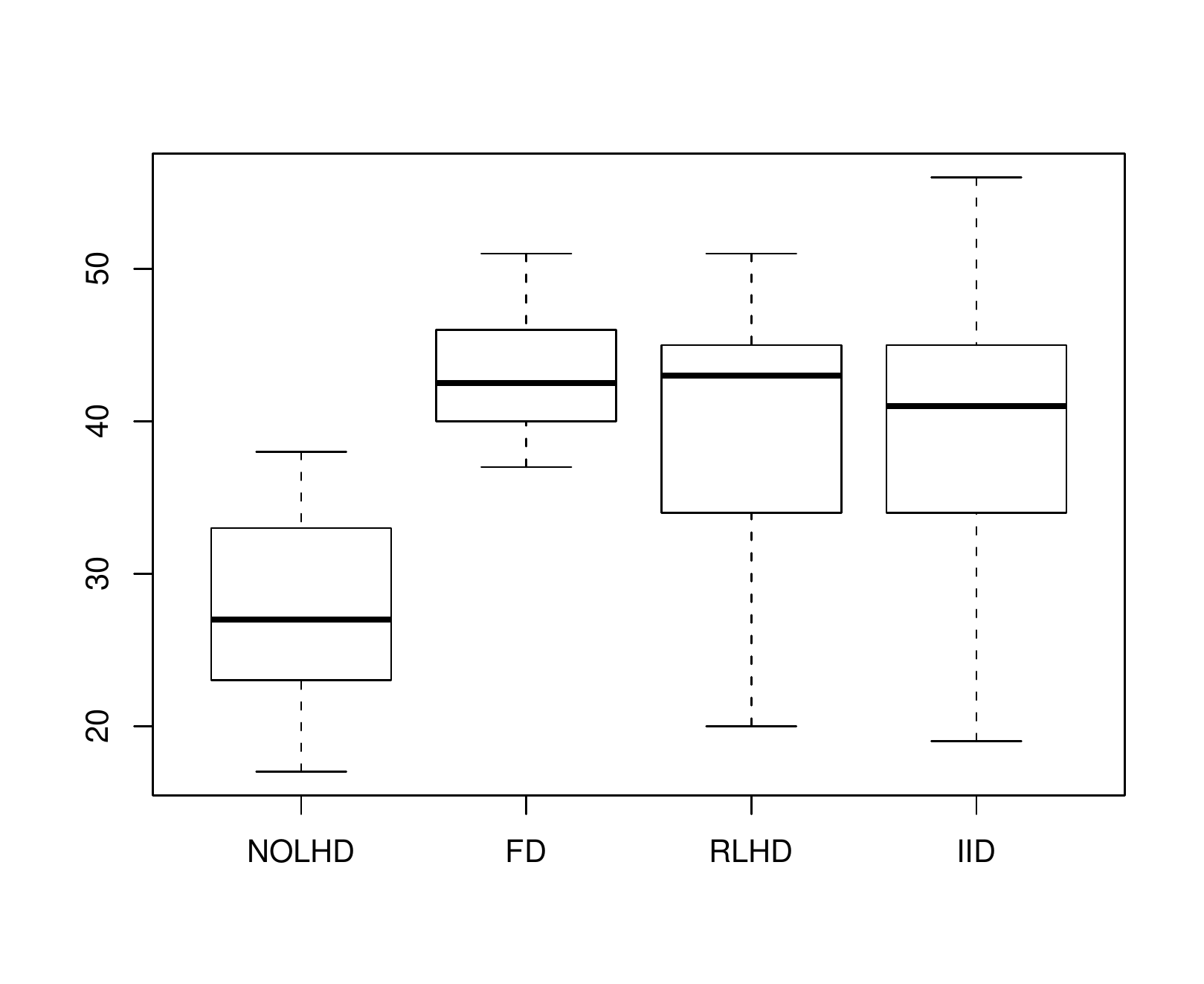}} \caption{Boxplots of the $\gamma_{NOLHD}$, $\gamma_{FD}$, $\gamma_{RLHD}$ and $\gamma_{IID}$ values over the 50 replications for Example \ref{eg: example-4}.}\label{fig: boxplot-eg4}
\end{figure}

These examples suggest that
the Lasso solution with an NOLHD is more accurate than
those of the competing designs.
Comparison of Fig. \ref{fig: boxplot-eg2}-- Fig. \ref{fig: boxplot-eg4} indicates that
the advantage of using NOLHDs in the Lasso problem
grows as the ratio $p/n$ increases.

\section{Discussion} \label{sec: Discussion}

We have proposed a method using NOLHDs from computer experiments to significantly enhance the variable selection accuracy of the Lasso procedure.
The effectiveness of this method has been successfully illustrated by several examples.
Design construction for the regularized linear models is a new research direction in design of experiments,
which can be applied in many areas, such comprehensive sensing (Song et al., 2016; Jung et al. 2019), and actuator placement (Du et al., 2019).
As an alternative to the proposed method, one may develop a model-based optimal design approach by extending the ideas of Meyer, Steinberg, and Box (1996) and Bingham and Chipman (2007).
Because  the Lasso solution in \eqref{eq: Lasso}
does not admit an analytic form, a potential difficulty in developing such an approach is to
introduce a sensible and computationally efficient criterion for the Lasso problem.
It will be of interest in a subsequent project
to study the proposed design strategy
for variants of the Lasso. A R package for the proposed method
is under development and will be released in the future.


\end{document}